\pgfplotsset{compat=newest}
\newlength\figureheight
\newlength\figurewidth
\def\Tplt/{20}
\def\Tobs/{30}
\newacronym{pem}{PEM}{prediction error methods}
\newacronym{oe}{OE}{output-error}
\newacronym{pdf}{pdf}{probability density function}
\newacronym{gp}{GP}{Gaussian process}
\newacronym{dc}{DC}{diagonal correlated}
\newacronym{tc}{TC}{tuned-correlated}
\newacronym{ss}{SS}{stable spline}
\newacronym{brm}{BRM}{Bayesian risk minimization}
\makeatletter \newcommand{\pgfplotsdrawaxis}{\pgfplots@draw@axis} \makeatother
\pgfplotsset{axis line on top/.style={
  axis line style=transparent,
  ticklabel style=transparent,
  tick style=transparent,
  axis on top=false,
  after end axis/.append code={
    \pgfplotsset{axis line style=opaque,
      ticklabel style=opaque,
      tick style=opaque,
      grid=none}
    \pgfplotsdrawaxis}
  }
}
\newif\ifarxiv
\newcommand{%
	\ifarxiv
		\includegraphics{tikz/.pdf}
	\else
		\tikzsetnextfilename{}
		\input{fig/.pgf}
	\fi
}[1]{%
	\ifarxiv
		\includegraphics{tikz/#1.pdf}
	\else
		\tikzsetnextfilename{#1}
		\input{fig/#1.pgf}
	\fi
}
\newcounter{exmpcnt}
\newtheorem{exmp}[exmpcnt]{Example}
\newcounter{resultcnt}
\newtheorem{thm}[resultcnt]{Result}
\newcommand{\ie}{i.e.\xspace}
\DeclareMathOperator{\E}{E}
\DeclareMathOperator{\Cov}{Cov}
\DeclareMathOperator{\tr}{tr}
\DeclareMathOperator*{\argmin}{arg\,min}
\DeclareMathOperator*{\argmax}{arg\,max}
\newcommand{\trace}[1]{\tr\left(#1\right)}
\newcommand{\diag}[1]{\operatorname{diag}\left(#1\right)}
\newcommand{\Exp}[1]{\E\left[#1\right]}
\newcommand{\CExp}[2]{\E_{#1}\left[#2\right]}
\newcommand{\mbf}[1]{\mathbf{#1}}
\newcommand{\mbs}[1]{\boldsymbol{#1}}
\newcommand{\what}[1]{\widehat{#1}}
\newcommand{\wtilde}[1]{\widetilde{#1}}
\newcommand{\wbar}[1]{\overline{#1}}
\renewcommand{\vec}[1]{\mbs{#1}} 
\newcommand{\mat}[1]{\mbf{#1}} 
\newcommand{\I}{\mat{I}}
\newcommand{\0}{\mat{0}}
\newcommand{\bbm}{\begin{bmatrix}}
\newcommand{\ebm}{\end{bmatrix}}
\newcommand{\Npdf}[3]{\mathcal{N}\left(#1 \middle| #2,#3\right)}
\newcommand{\T}{\top}
\newcommand{\risk}{\mathcal{R}}
\newcommand{\loss}{\mathcal{L}}
\newcommand{\pinv}{-}
\newcommand{\impsc}{g}
\newcommand{\imp}{\vec{\impsc}}
\newcommand{\impmean}{\wbar{\imp}}
\newcommand{\impcov}{\mat{\Sigma}}
\newcommand{\inputsc}{u}
\newcommand{\datasc}{y}
\newcommand{\data}{\vec{\datasc}}
\newcommand{\modeldata}{\wtilde{\vec{\datasc}}}
\newcommand{\dataset}{\mathcal{D}}
\renewcommand{\H}{\mat{H}}
\newcommand{\kernelK}{\mat{K}}
\newcommand{\noisesc}{e}
\newcommand{\noise}{\vec{\noisesc}}
\newcommand{\noisevar}{\lambda}
\newcommand{\para}{\vec{\theta}}
\newcommand{\paradec}{\what{\para}}
\newcommand{\hyp}{\vec{\eta}}
\newcommand{\modellinear}{\mathcal{G}}
\newcommand{\weight}{\mat{W}}
\newcommand{\finalT}{N}
\title{Regularized parametric system identification: \\ a decision-theoretic formulation}
\author[1]{Johan W{\aa}gberg \thanks{\url{johan.wagberg@it.uu.se}}}
\author[1]{Dave Zachariah\thanks{\url{dave.zachariah@it.uu.se}}}
\author[1]{Thomas B. Sch\"on\thanks{\url{thomas.schon@it.uu.se}}}
\affil[1]{Department of Information Technology, Uppsala University}
\begin{document}

\maketitle

%
%
%

\begin{abstract}
\noindent Parametric prediction error methods constitute a classical approach to the identification of linear dynamic systems with excellent large-sample properties.
A more recent regularized approach, inspired by machine learning and Bayesian methods, has also gained attention.
Methods based on this approach estimate the system impulse response with excellent small-sample properties.
In several applications, however, it is desirable to obtain a compact representation of the system in the form of a parametric model.
By viewing the identification of such models as a decision, we develop a decision-theoretic formulation of the parametric system identification problem that bridges the gap between the classical and regularized approaches above.
Using the output-error model class as an illustration, we show that this decision-theoretic approach leads to a regularized method that is robust to small sample-sizes as well as overparameterization.
\end{abstract}



\section{Introduction}
Traditionally, methods that identify a parametric model of a dynamical system are formulated using the frequentist paradigm.
A classical parametric approach is the \gls{pem}, which has excellent large-sample properties \cite{Soderstrom&Stoica1988_system,Ljung1998_sysid}.
Computational advances have more recently enabled identification methods formulated in the Bayesian paradigm \cite{Peterka1981bayesian,Ninness&Henriksen2010_bayesian}.
A recent nonparametric approach is the kernel regression method, which estimates system impulse responses in a regularized manner with excellent small-sample properties \cite{Pillonetto&Giuseppe2010_kernel,PillonettoEtAl2014_kernelsurvey}.
The choice of regularization corresponds to the modelling of prior information about the system under consideration \cite{Rasmussen&Williams2006_gaussian,Murphy2012machine}.

The classical and the Bayesian approach differ radically in their underlying philosophies, practical interpretations, and target models.
In this paper we propose a way to bridge the gap by developing a decision-theoretic approach for parametric system identification, cf. \cite{Chernoff&Moses1986_elementary,Berger1985_statistical,Robert2007_bayesian}.
Specifically, by viewing the identification of a specific parametric system model as a decision, we can unify classical frequentist and modern Bayesian identification approaches.
The framework enables comparisons and fruitful cross-fertilization of ideas from both approaches.
Specifically, it dispels
\begin{itemize}
	\item the need to specify a prior distribution of the model parameters, and
	\item the assumption that the system belongs to the considered model class.
\end{itemize}
Moreover, it achieves a natural regularization of the identification problem, which mitigates overfitting
\begin{itemize}
	\item in the small-sample regime
	\item when using overparameterized models.
\end{itemize}

\emph{Notation:} $\| \imp \|_{\weight} = \sqrt{\imp^\T \weight \imp }$ is a weighted $\ell_2$-norm.
The operation of extracting the diagonal elements of a square matrix as a vector is denoted $\diag{\cdot}$.

\section{Model classes}
\subsection{Nonparametric model specification}
Consider a discrete linear time-invariant and causal dynamic system with a single input and a single output (SISO).
The relationship between the input $\inputsc(t)$ and the output $\datasc(t)$ of such a system can be written as
\begin{equation}
	\datasc(t) = G_{\star}(q) \inputsc(t) + \noisesc(t),\quad t = 1, \ldots, \finalT,
	\label{eq:LTI}
\end{equation}
where $G_{\star}(q)$ is the system transfer function, $q$ is the shift operator and $\noisesc(t)$ is a random zero-mean disturbance assumed to be white with variance $\noisevar$ \cite{Ljung1998_sysid}.
We assume that the input $\inputsc(t)$ is known for all $t$ and that $\inputsc(t) \equiv 0$ for $t \leq 0$.
Let $$\dataset = \bigl\{ (\inputsc(1), \datasc(1)), \: \dots, \: (\inputsc(N), \datasc(N))\bigr\}$$ be input-output data collected at time instants $t = 1, \ldots, \finalT$.

The transfer function can be written as
\begin{equation}
    G_{\star}(q) = \sum_{k=0}^{\infty} \impsc_{\star}(k)q^{-k}.
    \label{eq:transfer_function_expansion}
\end{equation}
The collection $\impsc_{\star} = \{\impsc_{\star}(k)\}^{\infty}_{k=0}$ is the impulse response of the system and a linear system is uniquely defined by its impulse response, which constitutes a nonparametric model of the system.
Let $\modellinear$ denote the class of all causal SISO systems.
Thus for \cref{eq:LTI} we have that $\impsc_{\star} \in \modellinear$, see \cref{fig:modelclass} for an illustration.

Since the input $\inputsc(t)$ is zero for $t \leq 0$, for finite time $\finalT$, \cref{eq:LTI} can be written as a finite sum
\begin{equation}
    \datasc(t) = \sum_{k=0}^{t-1} \impsc_{\star}(k) \inputsc(t-k) + \noisesc(t) \; \Leftrightarrow \; \data = \H \imp_{\star} + \noise,
    \label{eq:output_as_sum}
\end{equation}
where 
\begin{equation*}
    \data = \bbm \datasc(1) \\ \datasc(2) \\ \vdots \\ \datasc(\finalT) \ebm \; \text{and} \; \H = \bbm \inputsc(1) \\ \inputsc(2) & \inputsc(1) \\ \vdots & \vdots & \ddots  \\ \inputsc(\finalT) & \inputsc(\finalT - 1) & \cdots & \inputsc(1) \ebm,
\end{equation*}
and $\H_{i,j} = 0$ for $j > i$, contain $\dataset$. The $N$ first coefficients of the impulse response are
\begin{equation*}
  \imp_{\star} = \bbm \impsc_{\star}(0) \\ \impsc_{\star}(1) \\ \vdots \\ \impsc_{\star}(\finalT-1) \ebm
\end{equation*}
and $\noise = [\noisesc(1) \,\, \noisesc(2) \,\, \dots \,\, \noisesc(N)]^{\T}$ contains the disturbance samples, with mean value $\Exp{\noise} = \0$ and covariance $\Exp{\noise \noise^{\T}} = \noisevar \I$.

\subsection{Parametric model specification}
Estimating the transfer function or, equivalently, the impulse response $g_{\star}$ can give useful information about the properties of the system. 
However, in many applications, e.g. control, a more compact representation of a system is more useful.
We consider a family of impulse responses parameterized by a vector $\para$ and denote it as $\modellinear_{\para}$.
One possible family is to let the transfer function be a rational function in $q^{-1}$, \ie the ratio of two polynomials, 
\begin{equation}
    \begin{aligned}
        G_{\para}(q)
            = \frac{B_{\para}(q)}{F_{\para}(q)}
            &= \frac{b_0 q^{-n_k} + b_1 q^{-n_k - 1} + \cdots + b_{n_b}q^{-n_k - n_b}}{1 + f_1 q^{-1} + \cdots + f_{n_f}q^{-nf}},
    \end{aligned}
    \label{eq:modelparametric}
\end{equation}
and $\para = \bbm b_{0}, b_{1}, \ldots, b_{n_b}, f_{1}, f_{2}, \ldots, f_{n_f} \ebm$.
The orders of the two polynomials are determined by $n_b$ and $n_f$, respectively, and an initial time delay is controlled by $n_k$. Here we consider them fixed a priori, but they can be determined as well. The transfer function \cref{eq:modelparametric} has a corresponding impulse response denoted $\impsc_{\para}$.
The relation between the parametric model class $\modellinear_{\para}$ and the family of all causal SISO systems $\modellinear$ is illustrated in \cref{fig:modelclass}.
\begin{figure}
    \centering
	\ifarxiv
		\includegraphics{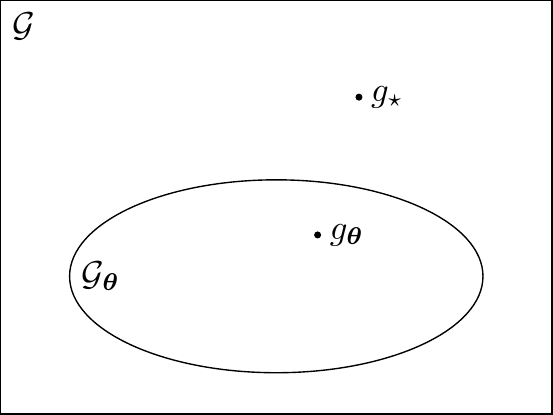}
	\else
		\tikzsetnextfilename{modelclass}
		\input{fig/modelclass.pgf}
	\fi

    \caption{The nonparametric class $\modellinear$ contains all linear time-invariant causal SISO systems. The parametrized class $\modellinear_{\para}$ is a subset of $\modellinear$. Note that $\impsc_\star$ does not have to belong to $\modellinear_{\para}$.}
    \label{fig:modelclass}
\end{figure}

\begin{exmp}[Two members in the  model classes]\label{ex:example} To illustrate the members of the model classes, suppose $\impsc_{\star} \in \modellinear$ is a second order system with no zeros, two poles in $z = 0.64 \pm 0.48i$ and a static gain of 2. Each member of a given class $\modellinear_{\para}$ is specified by the parameters
\begin{align*}
    \para 
        &= \bbm b_0 & f_1 & f_2 & f_3 & f_4\ebm^{\T}
\end{align*}
Note that this model class is overparameterized with respect to the unknown system. A specific example of $\impsc_{\para} \in \modellinear_{\para}$ is illustrated in \cref{fig:example_1_generating_system} along with $\impsc_\star$.

\begin{figure}
    \centering
    \setlength{\figurewidth}{0.55\columnwidth}
    \setlength{\figureheight}{0.3\columnwidth}
	\ifarxiv
		\includegraphics{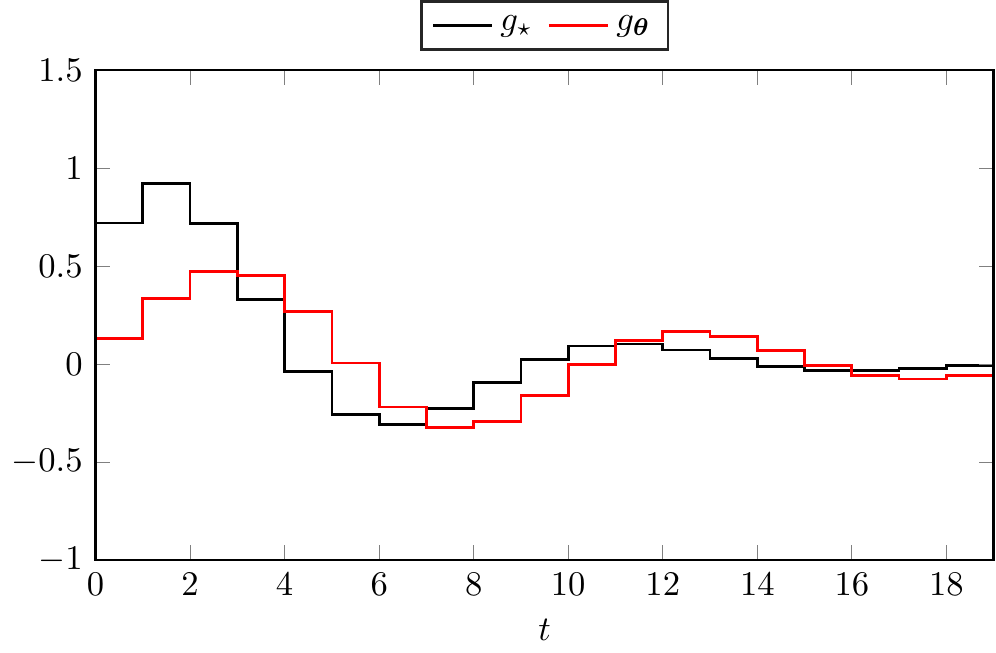}
	\else
		\tikzsetnextfilename{example_1_generating_system}
		\input{fig/example_1_generating_system.pgf}
	\fi

    \caption{The first \Tplt/ coefficients of impulse responses. For sake of illustration we chose $\para = \protect\bbm 0.13 & -2.65 & 2.92 & -1.64 & 0.41 \protect\ebm^{\T}$.} \label{fig:example_1_generating_system}
\end{figure}
\end{exmp}

\section{Decision-theoretic formulation}
Given the dataset $\dataset$, we want to identify a parametric model in $\modellinear_{\para}$ that is as close to the unknown system as possible. The choice of a specific model, specified by $\para$, is viewed as a decision with an associated loss
\begin{equation}
  \loss(\para) =  \frac{1}{2} \big\| \imp_{\star} - \imp_{\para} \big\|^{2}_{\weight},
\label{eq:loss}
\end{equation}
where $\imp_{\star}$ is given in \cref{eq:output_as_sum}, $\imp_{\para}$ is the corresponding impulse response of the parametric model and $\weight \succeq \0$ is a weight matrix.
In real applications, however, the system impulse response is unknown.
In lieu of $\imp_{\star}$, we consider an impulse response $\imp$ as a random variable drawn from $\modellinear$ and average the loss \cref{eq:loss} of a decision $\para$ over all possible values of $\imp$, a.k.a. risk:
\begin{equation}
	\risk(\para) = \CExp{\imp \mid \dataset}{ \loss(\para) }
	= \frac{1}{2}\trace{\weight \impcov} + \frac{1}{2}\|\impmean - \imp_{\para} \|^{2}_{\weight},
	\label{eq:risk}
\end{equation}
where $\impmean$ and $\impcov$ are the mean and covariance matrix of $\imp$, respectively, given $\dataset$.
The proof of the equality is given in \cref{app:risk}.
The expected value $\impmean$ represents our best guess of $\imp_\star$ and therefore it is reasonable to choose $\weight$ based on the precision of this guess.

The optimal decision is the parameter $\what{\para}$ that minimizes the risk, i.e.,
\begin{equation}
\boxed{\what{\para} = \argmin_{\para} \; \risk(\para).}
\label{eq:optimaldecision}
\end{equation}
This decision rule generates different identification methods based on model choice for $\imp$, which determines the mean $\impmean$ and precision matrix $\weight$. The problem \cref{eq:optimaldecision} can be solved using numerical search algorithms, cf. \cite[ch.~7]{Soderstrom&Stoica1988_system}. Note that the loss \cref{eq:loss} and resulting decision-rule can equivalently be formulated in the frequency domain. 

\begin{thm}
    Suppose we let $\impmean$ be the unbiased least squares estimate of $\imp_{\star}$ and take $\weight$ to be its precision matrix, i.e.,
    \begin{equation}
        \impmean = \H^{\dag}\data \quad\text{and}\quad  \weight = \noisevar^{-1} \H^{\T}\H.
        \label{eq:pem_mu_and_weight}
    \end{equation}
    Then \cref{eq:optimaldecision} corresponds to \gls{pem}. Note that the variance $\noisevar$ does not affect the decision and can therefore be set to unity. 
\end{thm}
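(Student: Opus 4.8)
The plan is to show that, under the choice \cref{eq:pem_mu_and_weight}, the risk \cref{eq:risk} coincides, up to a term that is constant in $\para$ and a positive multiplicative factor, with the quadratic output-error prediction-error criterion. First I would drop the term $\tfrac12\trace{\weight\impcov}$ from \cref{eq:risk}, since it does not depend on the decision $\para$; hence $\what{\para} = \argmin_{\para}\|\impmean - \imp_{\para}\|^2_{\weight}$. Substituting $\impmean = \H^{\dag}\data$ and $\weight = \noisevar^{-1}\H^{\T}\H$ gives
\begin{equation*}
    \|\impmean - \imp_{\para}\|^2_{\weight} = \noisevar^{-1}\bigl(\H^{\dag}\data - \imp_{\para}\bigr)^{\T}\H^{\T}\H\bigl(\H^{\dag}\data - \imp_{\para}\bigr),
\end{equation*}
and the positive factor $\noisevar^{-1}$ can be pulled out of the $\argmin$, which already establishes the claim that $\noisevar$ is immaterial.

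Next I would simplify the quadratic form using the Moore--Penrose relations $\H\H^{\dag}\H = \H$ and $(\H\H^{\dag})^{\T} = \H\H^{\dag}$, which together imply $\H^{\T}\H\H^{\dag} = \H^{\T}$. Expanding the quadratic form and collecting the terms that do not involve $\para$, this yields
\begin{equation*}
    \bigl(\H^{\dag}\data - \imp_{\para}\bigr)^{\T}\H^{\T}\H\bigl(\H^{\dag}\data - \imp_{\para}\bigr) = \|\data - \H\imp_{\para}\|^2 - \data^{\T}\bigl(\I - \H\H^{\dag}\bigr)\data,
\end{equation*}
so that $\what{\para} = \argmin_{\para}\|\data - \H\imp_{\para}\|^2$. (When $\inputsc(1)\neq 0$ the lower-triangular matrix $\H$ is invertible and $\H\H^{\dag}=\I$, so the constant vanishes outright; but this special case is not needed.)

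Finally I would identify $\|\data - \H\imp_{\para}\|^2$ with the \gls{pem} cost function. By the definition of $\H$ in \cref{eq:output_as_sum} and the assumption $\inputsc(t)\equiv 0$ for $t\le 0$, the $t$-th entry of $\H\imp_{\para}$ equals $\sum_{k=0}^{t-1}\impsc_{\para}(k)\inputsc(t-k) = \bigl(G_{\para}(q)\inputsc\bigr)(t)$, which is exactly the one-step-ahead predictor of the output for an output-error model with white disturbance. Hence $\|\data - \H\imp_{\para}\|^2 = \sum_{t=1}^{\finalT}\bigl(\datasc(t) - G_{\para}(q)\inputsc(t)\bigr)^2$ is $\finalT$ times the quadratic prediction-error criterion, and minimizing it over $\para$ is precisely \gls{pem}. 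The routine part is the algebra with the pseudoinverse; the step that carries the actual content is this last identification, i.e.\ recognizing that the weighting by $\H^{\T}\H$ turns the impulse-response discrepancy into a simulation/prediction error on the measured output.
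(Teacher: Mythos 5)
Your proof is correct and is essentially the paper's own argument run in the opposite direction: the paper starts from the PEM cost $\tfrac12\|\data-\H\imp_{\para}\|^2$ and completes the square around $\H\impmean$ using $(\data-\H\H^{\dag}\data)^{\T}\H=\0$, whereas you start from the risk and expand $\|\impmean-\imp_{\para}\|^2_{\H^{\T}\H}$ via the same Moore--Penrose identity $\H^{\T}\H\H^{\dag}=\H^{\T}$, reaching the same equivalence up to a $\para$-independent constant. Your treatment of the constant trace term, the factor $\noisevar^{-1}$, and the identification of $\H\imp_{\para}$ with the output-error predictor matches the paper, so there is no gap.
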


\begin{proof}
See \cref{app:pem_as_decision}.
\end{proof}

\begin{exmp}[Making a decision]\label{ex:2}
To identify the system, we generate in input $u(t)$ as a Gaussian white process with unit variance.
The unknown system $\impsc_\star$ is the same as in the previous example and the unknown disturbance variance is $\lambda = 1$.
We also consider the same overparameterized model class $\modellinear_{\para}$.

Using the resulting input-output data $\dataset$, it is straightforward to compute $\impmean$ and $\weight$ in \cref{eq:pem_mu_and_weight}.
Here we used a gradient search algorithm to find $\what{\para}$ in \cref{eq:optimaldecision}.
The impulse response of the generating system $\imp_\star$ and the mean $\impmean$ are plotted in \cref{fig:example_1_pem} along with the risk-minimizing decision $\imp_{\what{\para}}$ given by \cref{eq:optimaldecision}.
The weighting $\weight$ is visualized here as uncertainty bands around $\impmean$.
Specifically, when the the matrix is full rank, we plot a band $\pm 2 \sqrt{\diag{\weight^{-1}}}$, where the square root is computed element wise.
The band then corresponds a dispersion of two standard deviations.
Narrower uncertainty bands correspond to higher weights and therefore forces the optimal decision $\imp_{\what{\para}}$ to be close to $\impmean$ at the corresponding coefficients.
\begin{figure}
    \centering
    \setlength{\figurewidth}{0.55\columnwidth}
    \setlength{\figureheight}{0.35\columnwidth}
	\ifarxiv
		\includegraphics{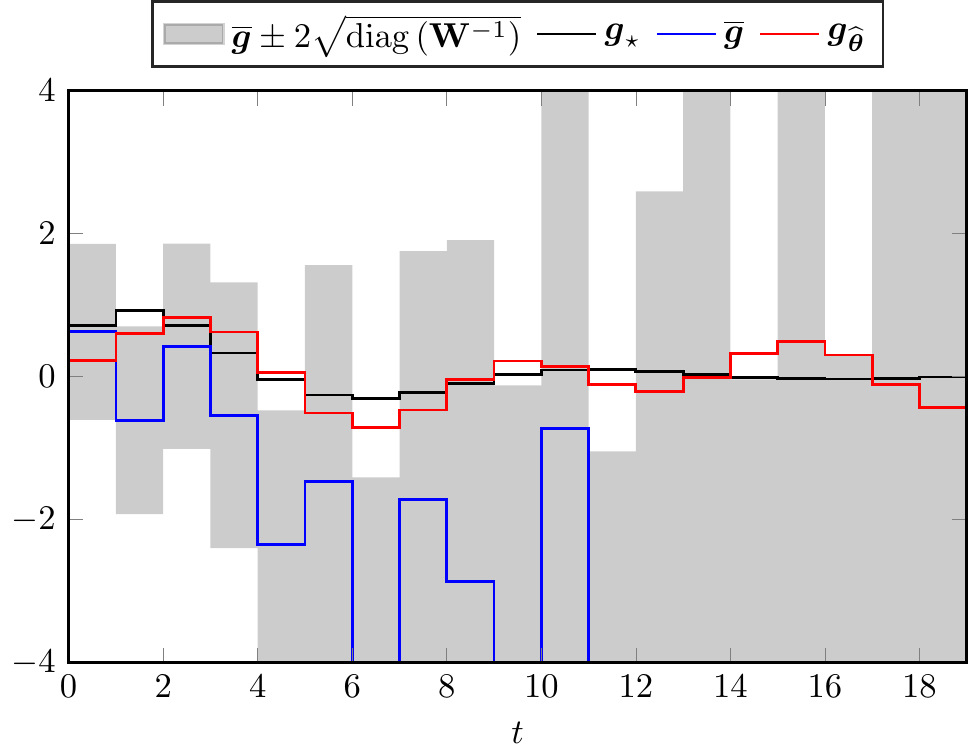}
	\else
		\tikzsetnextfilename{example_1_pem}
		\input{fig/example_1_pem.pgf}
	\fi

    \caption{Visualization of $\impmean$ (blue), the weighting $\weight$ (gray) and the final decision $g_{\paradec}$ (red) together with the data-generating impulse response $\imp_{\star}$ (black). Here $\impmean$ falls outside of the range of the figure.}
    \label{fig:example_1_pem}
\end{figure}
\end{exmp}

\begin{thm}
Suppose $\imp$ and the disturbances are modeled as Gaussian distributions, such that $\imp \sim \mathcal{N}(\0, \kernelK)$. Then 
    \begin{equation}
            \impmean = (\noisevar^{-1}\H^{\T}\H + \kernelK^{-1})^{-1}\H^{\T}\noisevar^{-1}\data \quad \text{and} \quad
            \weight = \kernelK^{-1} + \noisevar^{-1}\H^{\T}\H,
        \label{eq:prior_mu_and_weight}
    \end{equation}
    corresponds to the precision matrix of $\imp$ when conditioned on $\dataset$. Note that $\noisevar$ need not be set directly but can be absorbed into the parametrization of $\kernelK$.
\end{thm}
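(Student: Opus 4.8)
The plan is to compute the posterior distribution of $\imp$ given $\dataset$ under the stated Gaussian model and read off its mean and precision. By \cref{eq:output_as_sum} the data satisfy $\data = \H\imp + \noise$ with $\noise \sim \mathcal{N}(\0,\noisevar\I)$ independent of $\imp \sim \mathcal{N}(\0,\kernelK)$, so $(\imp,\data)$ is jointly Gaussian and the conditional $\imp \mid \dataset$ is Gaussian as well; it then remains only to identify its two parameters.

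First I would apply Bayes' rule, $p(\imp \mid \dataset) \propto p(\data \mid \imp)\,p(\imp)$ with $p(\data \mid \imp) = \Npdf{\data}{\H\imp}{\noisevar\I}$ and $p(\imp) = \Npdf{\imp}{\0}{\kernelK}$. Taking logarithms and dropping terms not involving $\imp$, the exponent equals, up to a constant, $-\tfrac12\,\imp^\T(\noisevar^{-1}\H^\T\H + \kernelK^{-1})\imp + \imp^\T \noisevar^{-1}\H^\T\data$. Setting $\mat{P} = \kernelK^{-1} + \noisevar^{-1}\H^\T\H$ and completing the square rewrites this as $-\tfrac12\,(\imp - \mat{P}^{-1}\noisevar^{-1}\H^\T\data)^\T\mat{P}\,(\imp - \mat{P}^{-1}\noisevar^{-1}\H^\T\data)$ plus a constant, which shows $\imp \mid \dataset \sim \mathcal{N}\!\left(\mat{P}^{-1}\noisevar^{-1}\H^\T\data,\ \mat{P}^{-1}\right)$. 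Hence the conditional mean is $\impmean = (\noisevar^{-1}\H^\T\H + \kernelK^{-1})^{-1}\H^\T\noisevar^{-1}\data$ and the conditional precision $\impcov^{-1}$ equals $\mat{P} = \kernelK^{-1} + \noisevar^{-1}\H^\T\H$, which is exactly the matrix taken as $\weight$. Completing the square is legitimate because $\kernelK \succ \0$ forces $\kernelK^{-1} \succ \0$ while $\H^\T\H \succeq \0$, so $\mat{P} \succ \0$ and $\mat{P}^{-1}$ exists. For the closing remark I would observe that $\impmean = (\H^\T\H + \noisevar\kernelK^{-1})^{-1}\H^\T\data$ and $\weight = \noisevar^{-1}(\H^\T\H + \noisevar\kernelK^{-1})$ depend on the pair $(\noisevar,\kernelK)$ only through $\noisevar\kernelK^{-1}$, and $\weight$ only up to an overall positive scalar; the latter does not move the minimizer in \cref{eq:optimaldecision} because $\tfrac12\trace{\weight\impcov}$ is constant in $\para$ and positively scaling $\weight$ only positively scales $\|\impmean - \imp_{\para}\|^{2}_{\weight}$. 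Thus $\noisevar$ may be fixed to one and a scale parameter in $\kernelK$ made to play its role.

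The argument has no genuine obstacle, being the textbook Gaussian linear-model posterior; the only things to watch are bookkeeping — matching the $\weight$ of the risk \cref{eq:risk} to $\impcov^{-1}$ rather than to $\impcov$ — and confirming that $\mat{P}$ is invertible. An alternative route is to apply the standard Gaussian conditioning (Schur-complement) formulas to the joint covariance of $(\imp,\data)$ and then invoke the matrix inversion lemma to reconcile the two equivalent expressions for $\impmean$; I would mention this but present the completion-of-the-square computation as the primary, self-contained proof.
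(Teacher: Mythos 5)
Your proposal is correct and follows essentially the same route as the paper: the paper's proof simply invokes conjugacy of the Gaussian prior $p(\imp)=\Npdf{\imp}{\0}{\kernelK}$ with the likelihood $p(\data\mid\imp)=\Npdf{\data}{\H\imp}{\noisevar\I}$ and cites the standard result, while you carry out that standard completion-of-the-square computation explicitly. Your added justification of the remark on absorbing $\noisevar$ into $\kernelK$ (invariance of the minimizer of \cref{eq:optimaldecision} under positive scaling of $\weight$) is also sound and goes slightly beyond what the paper writes out.
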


\begin{proof}
Let $\Npdf{\vec{x}}{\vec{\mu}}{\mat{\Sigma}}$ denote the \gls{pdf} of a multivariate Gaussian distribution with expected value $\vec{\mu}$ and covariance matrix $\mat{\Sigma}$.
Since $p(\data \mid \imp) = \Npdf{\data}{\H\imp}{\noisevar\I}$ and $p(\imp) = \Npdf{\imp}{\0}{\kernelK}$, the distribution over $\imp$ is conjugate to the data distribution and the result follows in a straightforward manner \cite{Rasmussen&Williams2006_gaussian}.
\end{proof}

\def \mytmp {\value{exmpcnt}}
\setcounter{exmpcnt}{1}
\begin{exmp}{(cont'd)}
    A priori we may expect the unknown system in $\modellinear$ to be stable. To model this prior information, suppose we set the entries of the covariance matrix  $\kernelK$ as 
    \begin{equation}
		\kernelK_{i,j} = 100 \cdot 0.8^{(i+j)/2} \cdot 0.7^{|i-j|}
		\label{eq:DC1}
    \end{equation}
In the subsequent section, we discuss means to automatically select $\kernelK$. The specific choice above places high prior weight on systems in $\modellinear$ that exhibit exponentially decaying impulse responses and weak positive correlation between adjacent coefficients. \Cref{fig:example_1_prior} shows the result of this choice. Compared to \Cref{fig:example_1_pem}, the uncertainty bands are now tighter around the mean $\impmean$, which also follows $\imp_\star$ more closely. Consequently, the risk-minimizing decision $\imp_{\what{\para}}$ is a better approximation of $\imp_\star$. The underlying reason is the regularization achieved by $\impmean$ and $\weight$. 
\begin{figure}
    \centering
    \setlength{\figurewidth}{0.55\columnwidth}
    \setlength{\figureheight}{0.35\columnwidth}
	\ifarxiv
		\includegraphics{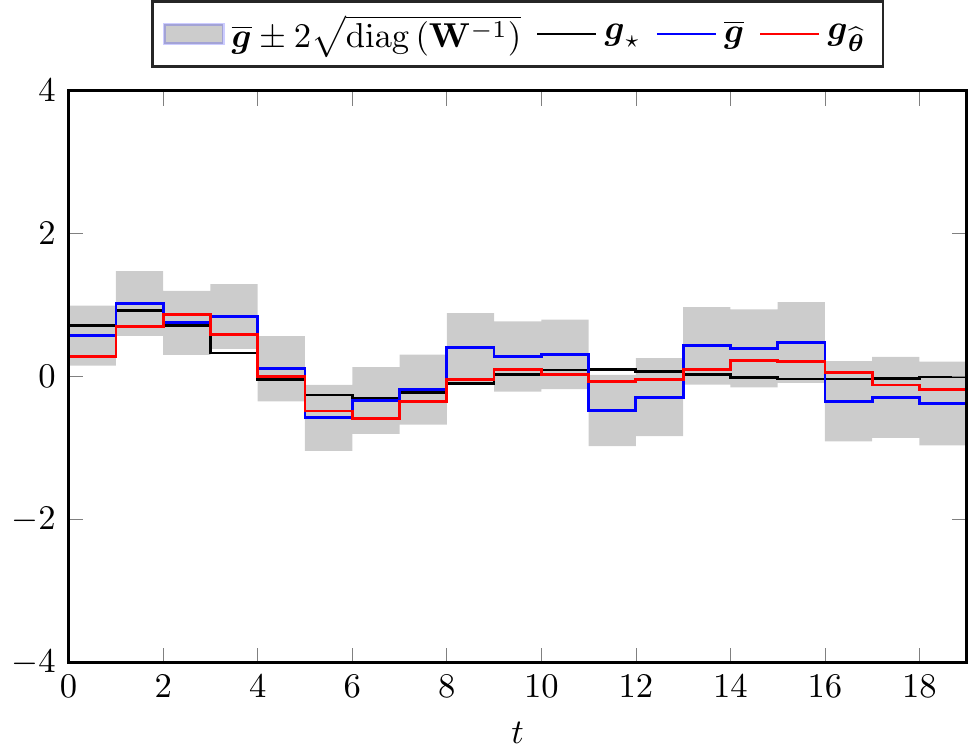}
	\else
		\tikzsetnextfilename{example_1_prior}
		\input{fig/example_1_prior.pgf}
	\fi

    \caption{Visualization of $\impmean$ (blue), the weighting matrix~$\weight$ (gray) and the final decision $g_{\hat\theta}$ (red) together with the data-generating impulse response $\imp_{\star}$ (black).}
    \label{fig:example_1_prior}
\end{figure}
\end{exmp}
\setcounter{exmpcnt}{\mytmp}
The example illustrates a bridge between classical frequentist and modern Bayesian approaches to system identification using a decision-theoretic approach outlined above. Note also that unlike a standard Bayesian approach, there is no need to directly specify a prior distribution over the parameters $\para$ nor does the data generating $\imp_{\star}$ have to reside in the considered model class $\modellinear_{\para}$. Via the covariance matrix $\kernelK$, the framework incorporates our prior information that the system is probably stable. This, however, begs the question of how to model such information by  $\kernelK$? This question is discussed in the subsequent section.

\section{Incorporating prior information}\label{sec:modeling_prior_information}
Prior information about the system is encoded in the prior distribution over the impulse response, e.g.,
\begin{equation*}
    \imp \sim \mathcal{N}\big( \0, \kernelK(\hyp) \big),
\end{equation*}
where the hyperparameter $\hyp$ specifies the covariance matrix. Let $\kernelK(\hyp)_{i,j}$ denote the $ij$th element of the matrix, which corresponds to a covariance function or `kernel' of the impulse response coefficients. A popular choice is
\begin{equation}
    \kernelK(\hyp)_{i,j} = c \alpha^{(i + j)/2} \rho^{|i - j|}, \quad
    \hyp = \bbm c & \alpha & \rho \ebm^{\T},
    \label{eq:dckernel}
\end{equation}
which represents exponentially decaying or increasing impulse responses and is thus suitable for modelling e.g. a priori information about stable systems. For other model choices, see \cite{PillonettoEtAl2014_kernelsurvey}.

The specification of hyperparameters $\hyp$ can be achieved in different ways. The Bayesian approach is to set them a priori, reflecting personalized beliefs about the system prior to seeing any data. A more pragmatic approach is to tune $\hyp$ using the data $\dataset$. Such methods include cross-validation, the SURE-estimator, and maximum likelihood \cite{Berger1985_statistical,Rasmussen&Williams2006_gaussian,Murphy2012machine,PillonettoEtAl2014_kernelsurvey}. Here we consider the maximum likelihood approach 
\begin{equation}
    \what{\hyp} = \argmax_{\hyp} \; \ell(\hyp),
\end{equation}
where $\ell(\hyp)$ is obtained by marginalizing out $\imp$ from the nonparametric data model. For Gaussian processes, we have that
\begin{equation}
    \ell(\hyp) =  -\frac{1}{2} \ln| \noisevar \I + \H\kernelK\H^{\T}| - \frac{1}{2}\|\data\|^{2}_{(\noisevar \I + \H \kernelK \H^{\T})^{-1}}
    \label{eq:marginal_likelihood}
\end{equation}
which only depends on $\kernelK(\hyp)_{i,j}$ evaluated at the observed time instants \cite{Rasmussen&Williams2006_gaussian}.

\begin{exmp}[Setting hyperparameters]
Consider again the dataset $\dataset$.
In~\cref{eq:DC1} we used the so-called \gls{dc} kernel with hyperparameter $\hyp_0 = \bbm 100 & 0.8 & 0.7\ebm^{\T}$.
Using this as initial guess and maximizing the marginal likelihood we obtain $\what{\hyp} = \bbm 0.45 & 0.71 & 0.74 \ebm^{\T}$.
We can visualize the prior by studying the marginal variance of each coefficient in the impulse response and plotting this as an uncertainty band together with the impulse response of the generating system $\imp_{\star}$.
While both priors encompass $\imp_{\star}$, the prior with tuned hyperparameters is much tighter as seen in \cref{fig:example_1_eb}.
The resulting resulting decision is illustrated in \cref{fig:example_1_krm}, which is to be compared with \cref{fig:example_1_prior}.
We see that the tuned hyperparameters lead to a more accurate $\impmean$ and weights $\weight$, which consequently improves the final decision.
Thus we can avoid the manual tuning of hyperparameters after choosing the form of $\kernelK(\hyp)$ that model the prior information about the system.
\begin{figure}
    \centering
    \setlength{\figurewidth}{0.55\columnwidth}
    \setlength{\figureheight}{0.35\columnwidth}
	\ifarxiv
		\includegraphics{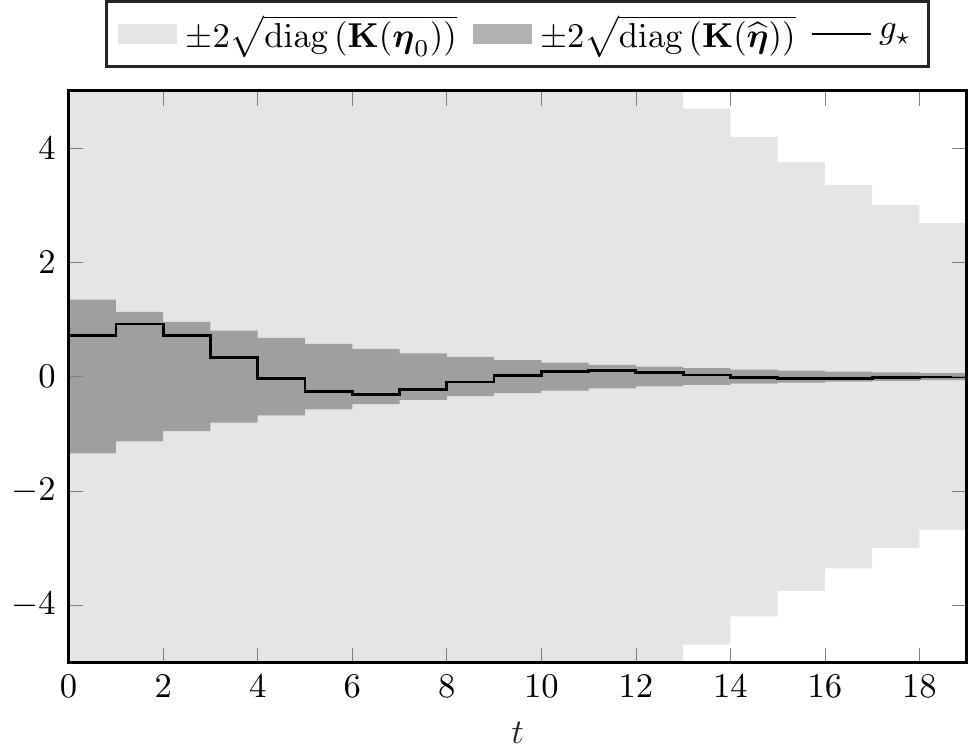}
	\else
		\tikzsetnextfilename{example_1_eb}
		\input{fig/example_1_eb.pgf}
	\fi

    \caption{A comparison of the prior distribution of $\imp$ using the arbitrarily chosen $\hyp$ in the example (light shaded) and using $\what{\hyp}$ \cref{eq:marginal_likelihood} (dark shaded).}
    \label{fig:example_1_eb}
\end{figure}

\begin{figure}
    \centering
    \setlength{\figurewidth}{0.55\columnwidth}
    \setlength{\figureheight}{0.35\columnwidth}
	\ifarxiv
		\includegraphics{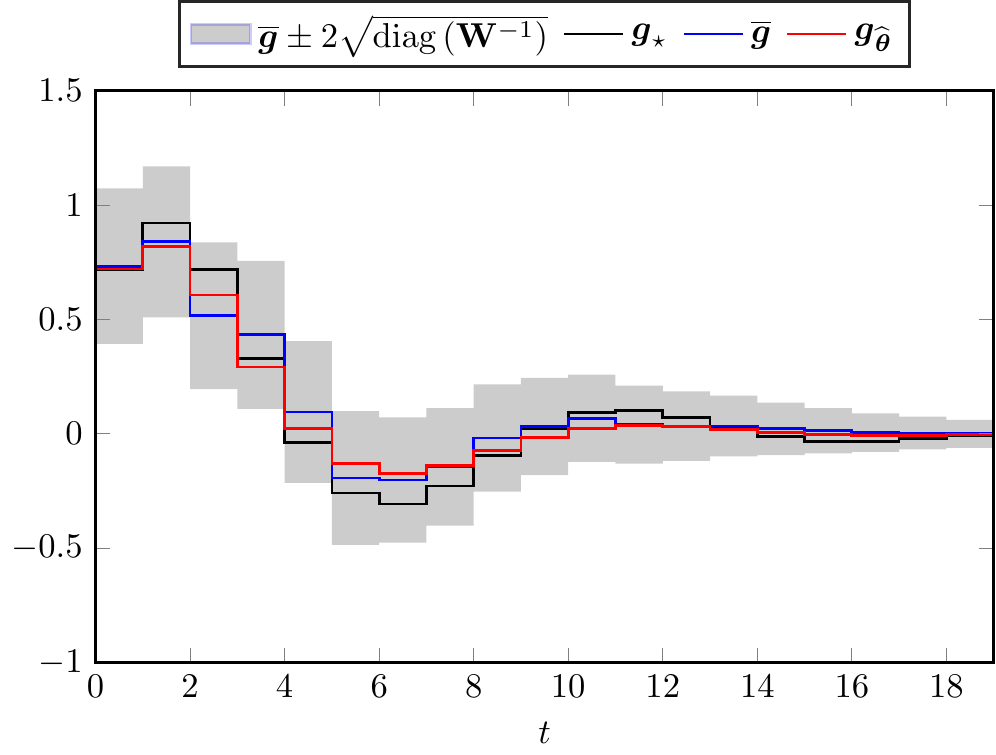}
	\else
		\tikzsetnextfilename{example_1_krm}
		\input{fig/example_1_krm.pgf}
	\fi

    \caption{Same scenario as in \cref{fig:example_1_prior} but with a covariance matrix $\kernelK(\what{\hyp})$ that is automatically tuned using \cref{eq:marginal_likelihood}.}
    \label{fig:example_1_krm}
\end{figure}

\end{exmp}
\setcounter{exmpcnt}{\mytmp}

\section{Numerical illustrations}
In this section, we evaluate the proposed decision-theoretic approach, which we call \gls{brm} when incorporating prior information in the form of a distribution on the impulse response $\imp$.
For the prior covariance matrix $\kernelK$, we consider a \gls{dc} covariance function \cref{eq:dckernel} and tune the hyperparameters using \cref{eq:marginal_likelihood}, cf. \cref{sec:modeling_prior_information}.
The method is summarized in \cref{alg:method}. The identified models are compared with those of \gls{pem}.

\begin{algorithm}
 \caption{Method for model class $\modellinear_{\para}$} \label{alg:method}
\begin{algorithmic}[1]
    \State Input: $\dataset = \{ (u(1),y(1)), \: \dots \:  (u(N), y(N)) \}$
    \State Form matrix $\H$.
    \State Tune hyperparameters $\hyp$.
    \State Form $\impmean$ and $\weight$ using \cref{eq:prior_mu_and_weight}.
    \State Solve problem in \cref{eq:optimaldecision}.
    \State Output: $\paradec$
\end{algorithmic}
\end{algorithm}

We consider the following data generating system
\begin{equation*}
	G_\star(q) = \frac{0.41}{1 - 1.82q^{-1} + 2.04q^{-2} - 1.27q^{-3} + 0.46 q^{-4}}.
\end{equation*}
This fourth-order system has two resonance peaks in the system frequency response.
Similar to \cref{ex:2}, we generate the input as a white process with variance $1$ and consider disturbances with variance 2.
For each experiment we obtain a dataset $\dataset$ of length $N$ and evaluate the decision by the normalized squared error
\begin{equation*}
	\log \frac{\| \imp_\star - \imp_{\paradec} \|^2}{\| \imp_\star \|^2}
\end{equation*}
in logarithmic units for clear comparisons.
We repeat the experiments 100 times and study the distribution of errors.

In the first evaluation, we vary the size of the dataset using small, medium and large $N$.
The model class $\modellinear_{\para}$ under consideration matches the form of the unknown system.
The results are show in \cref{fig:sim_N}.
Note that for small sample sizes, $N=30$, \gls{pem}-based decisions frequently produces errors greater 0, which is higher than simply assuming an impulse response with zero coefficients.
The effect of regularization in \gls{brm} is significant and the performance by \gls{pem} is only matched by doubling the sample size to $N=60$.
When $N=120$, the median errors of both methods are comparable, but \gls{brm} has a narrower dispersion than \gls{pem}.  
\begin{figure}
    \centering
    \setlength{\figurewidth}{0.85\columnwidth}
    \setlength{\figureheight}{0.5\columnwidth}
	\ifarxiv
		\includegraphics{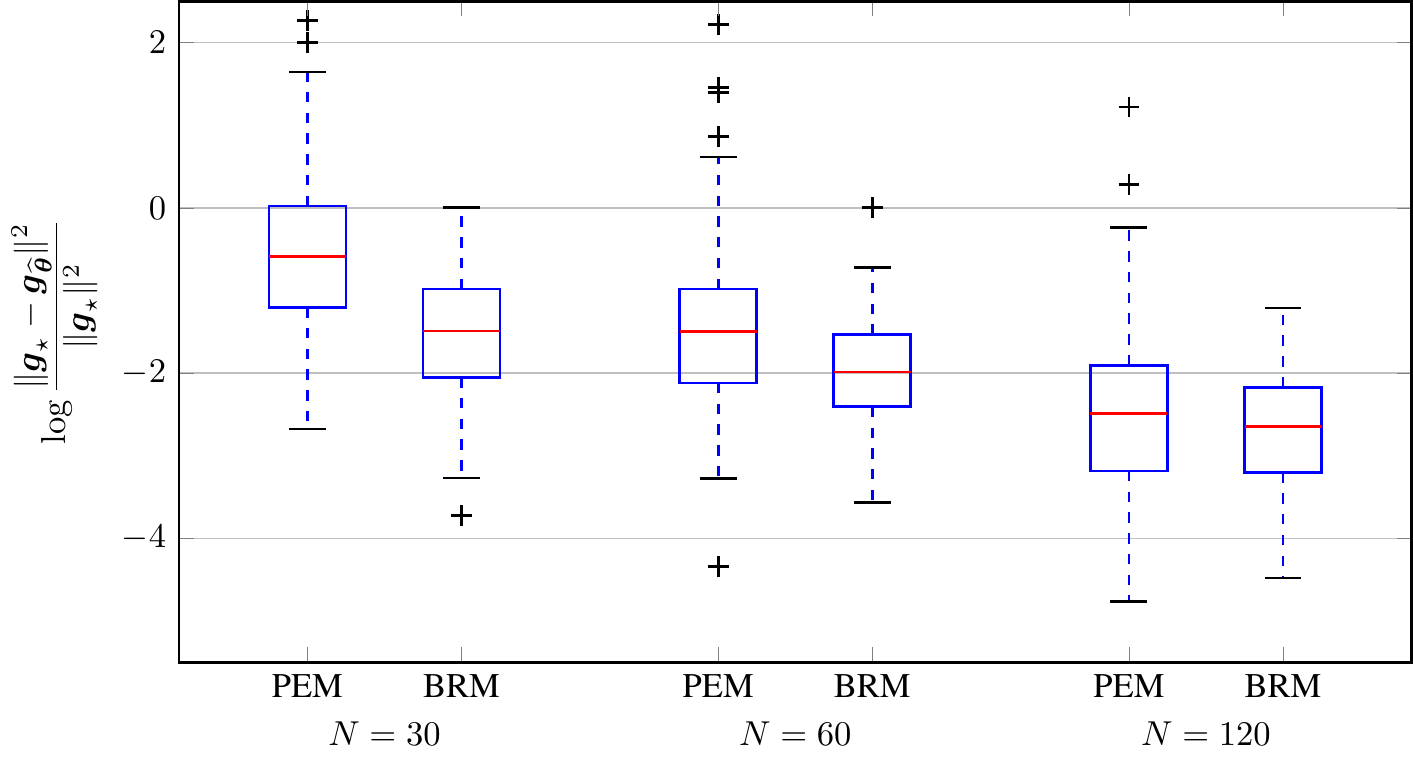}
	\else
		\tikzsetnextfilename{boxplot_N}
		\input{fig/boxplot_N.pgf}
	\fi

    \caption{Box-plots of errors for \gls{pem} and \gls{brm}. Numbers indicate the sample size $N$.}
    \label{fig:sim_N}
\end{figure}

In the second evaluation, we vary the order $n_f$ of the model class $\modellinear_{\para}$ when $N=60$.
Specifically, we consider $n_f$ to be lower, equal to, and greater than the unknown system model order.
The results are illustrated in \cref{fig:sim_nf}, where it is seen that for $n_f=2$, \gls{pem} and \gls{brm} perform similarly.
As the overparameterization increases, \gls{pem} overfits and produces many decisions with normalized errors above 0.
By contrast, the robustness of \gls{brm} to overparameterization is notable.

\begin{figure}
    \centering
    \setlength{\figurewidth}{0.80\columnwidth}
    \setlength{\figureheight}{0.5\columnwidth}
	\ifarxiv
		\includegraphics{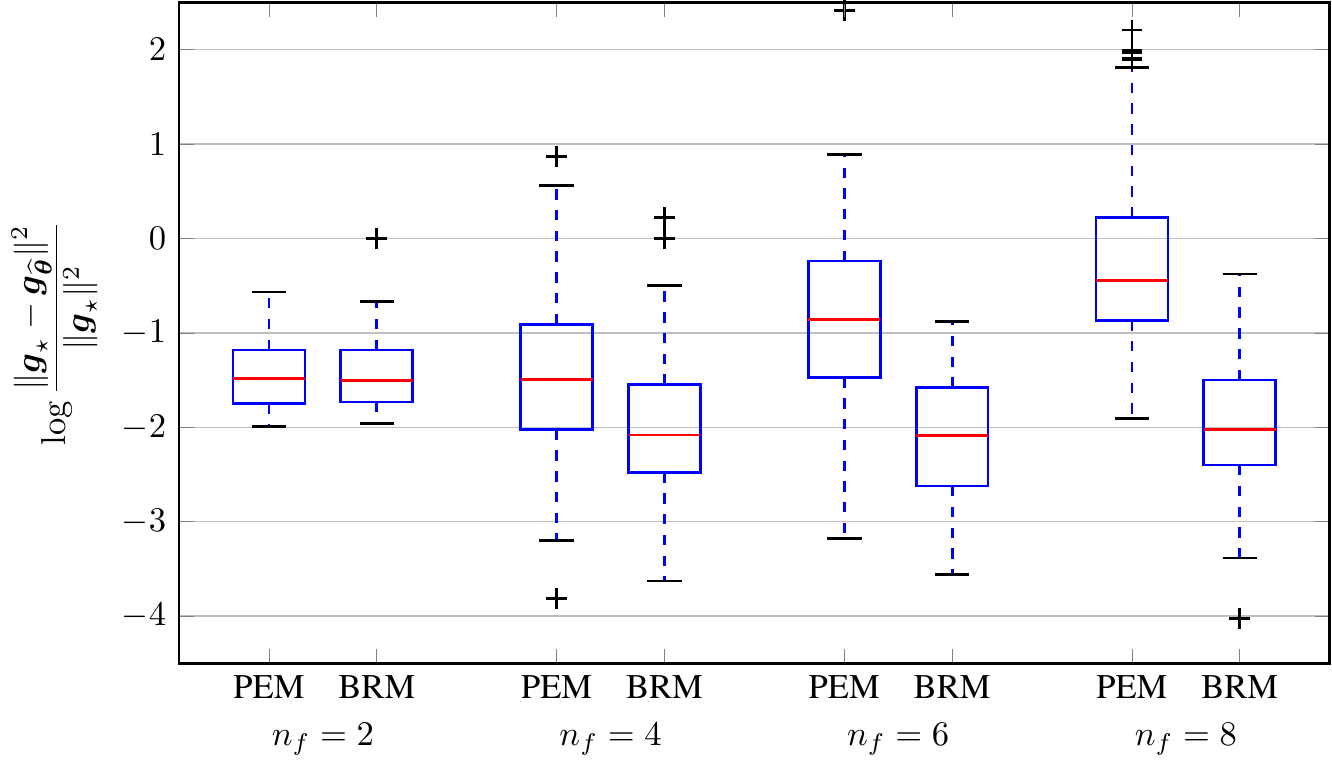}
	\else
		\tikzsetnextfilename{boxplot_nf}
		\input{fig/boxplot_nf.pgf}
	\fi

    \caption{Box-plots of errors for \gls{pem} and \gls{brm}. Numbers indicate the model order $n_f$. Here $N=60$ samples are used. The order of the data generating system is 4.}
    \label{fig:sim_nf}
\end{figure}

\section{Conclusions}

We developed a decision-theoretic approach that can be used to identify parametric linear time-invariant models. This approach bridges the gap between the existing classical and regularized identification approaches. We showed that it leads to a parametric identification method with a simple data-adaptive regularization by assigning a prior distribution over the system impulse response. This incorporates prior information of the system and dispels both the need to assume a prior distribution over the model parameters as well as the assumption that the unknown system belongs to the model class. The numerical experiments illustrated that the regularization yields a method that is robust to both small sample-sizes and overparameterization.

In future work, we will study how the approach can be extended to a broader model class. We also consider the equivalent formulation of the risk function in the frequency domain.

\appendix
\section{Appendix}
\subsection{Risk expression}\label{app:risk}
Let us start by noting that 
\begin{align*}
	\| \imp - \imp_{\para} \|^2_{\weight}
	    = \| \imp - \impmean + \impmean - \imp_{\para} \|^2_{\weight}
	    = \| \imp - \impmean\|^2_{\weight} + \| \impmean - \imp_{\para}\|^2_{\weight} + 2(\imp - \impmean)^{\T}(\impmean - \imp_{\para}),
\end{align*}
for any vector $\impmean$ of appropriate size. 
Now, if we choose $\impmean = \CExp{\imp \mid \dataset}{\imp}$, we obtain
\begin{align*}
	\CExp{\imp \mid \dataset}{ \| \imp - \imp_{\para}\|^2_{\weight}} 
		&= \E_{\imp \mid \dataset} \Bigg[\| \imp - \impmean\|^2_{\weight} + \| \impmean - \imp_{\para}\|^2_{\weight} + 2(\imp - \impmean)^{\T}(\impmean - \imp_{\para})\Bigg] 
		= \trace{\weight \mat{\Sigma}} + \|\impmean - \imp_{\para} \|^{2}_{\weight}.
\end{align*}

\subsection{PEM as a decision}\label{app:pem_as_decision}
The objective function used in the prediction error method can be written as
\begin{equation}
	\what{\para} = \argmin_{\para} \frac{1}{2} \big\| \data - \modeldata_{\para} \big\|^{2},
\end{equation}
where $\modeldata_{\para}$ is the predicted  output of the parametric system given the input.
The predicted output is given by
\begin{equation}
    \modeldata_{\para} = \H \imp_{\para}.
\end{equation}
Let $\impmean = \H^{\pinv}\data$, where $\H^{\pinv}$ is a pseudoinverse of $\H$.
We get
\begin{align*}
	\what{\para}
		&= \argmin_{\para} \frac{1}{2} \| \data - \modeldata_{\para}\|^2 \\
		&= \argmin_{\para} \frac{1}{2} \| \data - \H\impmean + \H\impmean - \modeldata_{\para}\|^2 \\
		&= \argmin_{\para} \frac{1}{2} \Big[ \| \data - \H\impmean\|^2 + 2(\data - \H\impmean)^{\T}\H(\impmean - \imp_{\para}) + \| \H(\impmean - \imp_{\para})\|^2 \Big] \\
		&\Bigg/ \impmean = \H^{\pinv}\data  \implies (\data -\H \impmean)^{\T}\H = (\data -\H \H^{\pinv}\data)^{\T}\H = \0 \Bigg/ \\
		&= \argmin_{\para} \frac{1}{2} \| \impmean - \imp_{\para}\|^{2}_{\H^{\T}\H}.
\end{align*}
Assuming that the generating system is an \gls{oe} system, that is $\data = \H\imp + \noise$, we have that \cite{Soderstrom&Stoica1988_system}
\begin{equation*}
    \Exp{\impmean} = \H^{\pinv}\H\imp \quad\text{and}\quad \Cov(\impmean) = \lambda(\H^{\T}\H)^{\pinv}.
\end{equation*}
Note that $\lambda^{-1}\H^{\T}\H$ is a precision matrix and therefore a straight-forward choice for $\weight$. Thus the risk
\begin{equation}
	\risk(\para) = \frac{1}{2} \| \impmean - \imp_{\para}\|^2_{ \lambda^{-1}\H^{\T}\H}.
\end{equation}
leads to PEM since the risk-minimizing decision is invariant with respect to $\lambda$.


\section*{Acknowledgement}
The authors would like to thank Prof. Petre Stoica and Prof. Lennart Ljung for valuable comments.

\bibliographystyle{plain}
\bibliography{ref_decision}

\end{document}